\newtheorem{theorem}{Theorem}
\newtheorem{corollary}[theorem]{Corollary}
\newdefinition{definition}[theorem]{Definition}
\newdefinition{remark}[theorem]{Remark}
\newdefinition{claim}[theorem]{Claim}
\newproof{proof}{Proof}
\newcommand{\Acc}{\ensuremath{\mathit{Acc}}}
\newcommand{\distriset}[1]{\ensuremath{\mathcal{D}(#1)}}
\newcommand{\runs}[1]{\ensuremath{\mathit{Runs}(#1)}}
\newcommand{\runsf}[1]{\ensuremath{\mathit{Runs_f}(#1)}}
\newcommand{\Tr}[1]{\ensuremath{\mathit{T}(#1)}}
\newcommand{\trace}{\ensuremath{\mathit{tr}}}
\newcommand{\observable}{\ensuremath{\mathit{Obs}}}
\newcommand{\observableleak}{\ensuremath{\mathit{Obs}_\mathit{leak}}}
\newcommand{\observableopaque}{\ensuremath{\mathit{Obs}_\mathit{opaque}}}
\newcommand{\obs}{\ensuremath{\mathcal{O}}}
\newcommand{\Act}{\ensuremath{\mathit{Act}}}
\newcommand{\fee}{\ensuremath{\varphi}}
\newcommand{\A}{\ensuremath{\mathcal{A}}}
\newcommand{\Prob}{\ensuremath{\mathbf{P}}}
\newcommand{\N}{\ensuremath{\mathbb{N}}}
\newcommand{\PrDiscl}{\ensuremath{\textrm{PD}}}
\newcommand{\Buchi}[1]{\ensuremath{\textrm{B\"uchi}(#1)}}
\newcommand{\coBuchi}[1]{\ensuremath{\textrm{co-B\"uchi}(#1)}}
\newcommand{\parity}[1]{\ensuremath{\textrm{Parity}(#1)}}
\newcommand{\reach}[1]{\ensuremath{\textrm{Reach}(#1)}}
\newcommand{\Inf}[1]{\ensuremath{\textrm{Inf}(#1)}}
\newcommand{\Appear}[1]{\ensuremath{\textrm{Appear}(#1)}}
\title{Probabilistic Opacity for Markov Decision Processes}
\author[UPMC,CNRS]{B\'eatrice B\'erard}\ead{Beatrice.Berard@lip6.fr}
\author[IST]{Krishnendu
  Chatterjee}\ead{Krishnendu.Chatterjee@ist.ac.at}
\author[UPMC,CNRS]{Nathalie Sznajder}\ead{Nathalie.Sznajder@lip6.fr}
\address[UPMC]{Sorbonne Universit\'es, UPMC Univ Paris 06, UMR 7606, LIP6, F-75005, Paris, France} 
\address[CNRS]{CNRS, UMR 7606, LIP6, F-75005, Paris, France}\address[IST]{IST Austria (Institute of Science and
  Technology, Austria)}
\begin{document}

\begin{abstract}
  Opacity is a generic security property, that has been defined on
  (non probabilistic) transition systems and later on Markov chains
  with labels. For a secret predicate, given as a subset of runs, and
  a function describing the view of an external observer, the value of
  interest for opacity is a measure of the set of runs disclosing the
  secret. We extend this definition to the richer framework of Markov
  decision processes, where non deterministic choice is combined with
  probabilistic transitions, and we study related decidability problems
  with partial or complete observation hypotheses for the
  schedulers. We prove that all questions are decidable with complete
  observation and $\omega$-regular secrets. With partial observation,
  we prove that all quantitative questions are undecidable but the
  question whether a system is almost surely non opaque becomes
  decidable for a restricted class of $\omega$-regular secrets, as well
  as for all $\omega$-regular secrets under finite-memory schedulers.
\end{abstract}

\maketitle

\section{Introduction}
Due to the tremendous increase in network communications in the last
thirty years, a large amount of work was devoted to the study of
security properties, to ensure the preservation of secret data during
these communications. \emph{Information flow} characterizes the
(possibly illegal and indirect) transmission of such data from a high
level user to a low level one. Already in the eighties, a basic
version of non-interference was defined in~\cite{goguen82}, stating
that a system is secure if high level actions cannot be detected by
low level observations. Among all the subsequent studies, opacity was
introduced in~\cite{mazare05,bryans08} as a general framework where a
wide range of security properties can be specified, for a system
interacting with a passive attacker. 
For a system $\mathcal S$, opacity is parameterized by a secret
predicate $\fee$ described as a subset of executions and an
observation function over executions. The system is opaque if, for any
secret run in $\fee$, there is another run not in $\fee$ with the same
observation. When this property is satisfied, the passive attacker
cannot learn from the observation if the execution is secret.
% It was shown
% in~\cite{bryans08} how non-interference, noninference, various notions
% of anonymity, key compromise and refresh, downgrading, etc. can be
% expressed by opacity. 
Ensuring opacity by controller synthesis was
further studied in~\cite{dubreil10,dubreil12} while relations with
two-player games were established in~\cite{pinchinat}. 

Deciding opacity, however, only provides a yes/no answer, but no
evaluation of the amount of information gained by a passive
attacker. Since more and more security protocols make use of
randomization to reach some security
objectives~\cite{chaum88,reiter98}, it becomes important to extend
specification frameworks in order to handle measures of information
leaks. For this reason, quantitative approaches for security
properties were already advocated in~\cite{millen87,wittbold90},
mostly based on information theory. From this point on, numerous
studies were devoted to the computation of (covert) channel capacity
in various cases (see e.g.~\cite{mantel2009}) or more generally
information leakage.

To provide quantitative measures of opacity, several definitions have
been proposed in a probabilistic
setting~\cite{lakhnech05,berard10,boreale11b,DBLP:conf/tgc/BryansKM12,berard13,DBLP:journals/tac/SabooriH14}. They
were, however, restricted to purely probabilistic models, based on
Markov chains equipped with labels, to permit observations on runs. We
show here how to extend some measures of~\cite{berard13} to Markov
decision processes (MDPs) with infinite runs. The simplest one
computes what we call here the \emph{probabilistic disclosure},
providing a probabilistic measure for the set of runs whose
observation reveals that a secret run has been executed.  With the
richer model of MDPs, where non determinism is combined with
probabilities, a scheduler can cooperate with the passive external
observer to break the system opacity.  We focus on $\omega$-regular
secrets and morphisms for the observation functions, and prove that
the probabilistic disclosure can be computed when the scheduler can
distinguish the states of the model.  The class of $\omega$-regular
languages provides a robust specification language~\cite{Thomas97},
extending classical regular languages from finite words to infinite
words.  Such $\omega$-regular languages are often needed to express
opacity in the non probabilistic as well as the probabilistic setting.
With partial observation for the schedulers, the question whether a
system is almost surely non opaque remains decidable for a restricted
class of $\omega$-regular secrets, as well as for all $\omega$-regular
secrets under finite-memory schedulers, whereas all quantitative
problems become undecidable.  Moreover, for all decidable results we
present optimal complexity results: for complete observation (where
the scheduler can distinguish states of the model) we present
polynomial-time results with respect to the size of the model, whereas
for partial observation, for all decidable results we show
EXPTIME-completeness.

We recall some definitions for probabilistic models in
Section~\ref{sec:prel}. Opacity and disclosure are defined for Markov
decision processes in Section~\ref{sec:op} and proofs for the
(un)decidability results are given in Section~\ref{sec:res}. We
conclude in Section~\ref{sec:conc}.

\section{Preliminaries}\label{sec:prel}
For a finite alphabet $Z$, we denote by $Z^*$ the set of finite words
over $Z$, by $Z^\omega$ the set of infinite words over $Z$, with
$Z^\infty = Z^* \cup Z^\omega$.

We first recall some classical notions on automata.
\subsection{Automata}

\begin{definition}
  A (deterministic) automaton is a tuple $\A=(Q,\Sigma, \delta, q_0,
  F)$, where $Q$ is a finite set of states, $\Sigma$ is an input
  alphabet, $\delta:Q\times\Sigma \rightarrow Q$ is a transition
  function, $q_0 \in Q$ is the initial state, and $F$ is either a
  subset of $Q$, or a mapping from $Q$ to a finite subset of natural
  numbers.
\end{definition}

Accepting conditions defined from $F$ will be described hereafter.

A run of the automaton $\A$ on a word $w=a_1a_2\cdots\in\Sigma^\omega$
is an infinite sequence $\rho=q_0q_1\cdots$ such that for all $i\geq
0$, $q_{i+1}=\delta(q_i,a_{i+1})$.  The accepting runs of an automaton
are defined according to the acceptance condition.  In the sequel, we
consider B\"uchi, co-B\"uchi and parity acceptance conditions.

For a run $\rho=q_0q_1\cdots\in Q^\omega$, we let $\Inf{\rho}$ be the
set of states appearing infinitely often in the sequence. When
$F\subseteq Q$, we note $\Buchi{F}=\{\rho\in Q^\omega\mid
\Inf{\rho}\cap F \neq \emptyset\}$ and $\coBuchi{F}=\{\rho\in
Q^\omega\mid \Inf{\rho}\cap F = \emptyset\}$. When $F:Q\rightarrow
\{1,\dots,k\}$, with $k\in\mathbb{N}$, the acceptance condition is a
parity condition. We note $\parity{F}=\{\rho\in Q^\omega\mid
\min\{F(q)\mid q\in\Inf{\rho}\} \textrm{ is even}\}$.  For an
acceptance condition $\Acc\in\{\Buchi{F},\coBuchi{F},\parity{F}\}$, we
say that a run $\rho$ over a word $w$ is accepting if it is in
$\Acc$. The word $w$ is then said to be accepted by $\rho$.

We denote respectively by $L_B(\A)$, $L_C(\A)$ and $L_P(\A)$ the set
of words accepted by the runs of $\A$ in $\Buchi{F}$, $\coBuchi{F}$
and $\parity{F}$. A subset $L$ of $\Sigma^\omega$ is $\omega$-regular
if there is an automaton $\A$ such that $L= L_P(\A)$.

In the sequel, we write DBA for deterministic B\"uchi automata, DCA
for deterministic co-B\"uchi automata and DPA for deterministic parity
automata, according to the choice of acceptance condition.

\subsection{Probabilistic systems}

We consider systems modeled by Markov decision processes, that
generalize Markov chains by combining non deterministic actions with
probabilistic transitions. To define opacity measures on Markov
chains, the probabilistic transitions are equipped with labels that
may be used to define an observation function on runs. In the setting
of Markov decision processes, labels are also added on the
probabilistic transitions. They may be observed by a passive attacker
while non deterministic actions are chosen by a scheduler, as
explained below.

Given a countable set $S$, a discrete distribution is a mapping $\mu :
S \rightarrow [0,1]$ such that $\sum_{s\in S}\mu(s)=1$.  The set of
all discrete distributions on $S$ is denoted by $\distriset{S}$. 
\begin{definition}[Markov Decision Process]
  A Markov decision process (MDP) is a tuple $\mathcal{A} = (Q,\Sigma,
  \Act, \Delta,q_0)$ where:
\begin{itemize}
\item $Q$ is a finite set of states,
\item $\Act$ is a finite set of actions,
\item $\Sigma$ is a finite alphabet for the labeling of transitions,
\item $\Delta: Q\times \Act\rightarrow \mathcal{D}(\Sigma\times Q)$ is
  a (partial) transition function that associates with a state and an
  action from $\Act$ a probability distribution over the possible
  transition labels and successor states,
\item $q_0$ is the initial state.
\end{itemize}
\end{definition}

Figure~\ref{fig:pomdp} shows an MDP with four actions. Actions
$\alpha_1$ and $\alpha_2$ bear two different distributions for labels
$a$ and $b$. They start either from state $q_0$ or from state $q'_0$,
and lead to either $q_1$ or $q_2$. Actions $\beta_1$ and $\beta_2$
start from $q_1$ and $q_2$ respectively and return to $q_0$ or $q'_0$
with probability $\frac12$.

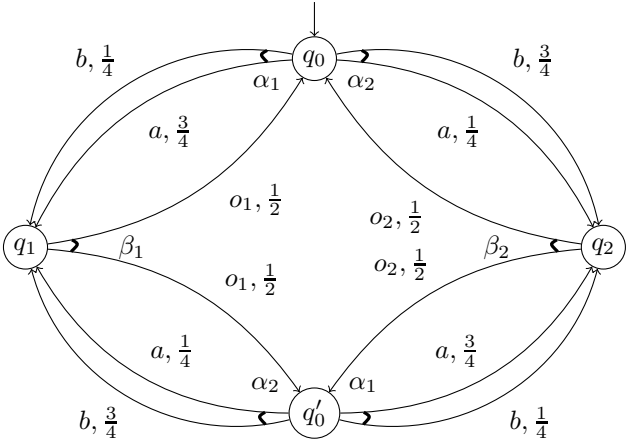
\begin{figure}[htb]
\centering
\begin{tikzpicture}[auto,node distance=4.7cm]
\tikzstyle{grouptrans}=[draw,very thick]
  \tikzstyle{every state}+=[minimum size=5pt,inner
  sep=2pt,initial text=] 
\node[state,initial above] (q0) at (0,0) {$q_0$};
\node[state, below of=q0] (q20) {$q'_0$};
\node[state] (q1) at (-3.8,-2.5) {$q_1$};
\node[state] (q2) at (3.8,-2.5) {$q_2$};

\path[->] (q0) edge [bend right=30] node[pos=0.07,anchor=south] (a1) {}
node {$a,\frac34$} (q1); 
\path[->] (q0) edge [bend right=45]
node[pos=0.07,anchor=south] (b1) {} node [swap] {$b,\frac14$} (q1);
\path[->] (q20) edge [bend left=30] node[pos=0.07,anchor=south] (a12)
{} node [swap] {$a,\frac14$} (q1); 
\path[->] (q20) edge [bend left=45]
node[pos=0.07,anchor=south] (b12) {} node {$b,\frac34$} (q1);

\node (control) at ($(barycentric cs:a1=1,b1=1) + (-0.10,-0.10)$) {};
\path[grouptrans] (a1.south) .. controls (control) .. (b1.south);

\node (control) at ($(barycentric cs:a12=1,b12=1) + (-0.10,-0.10)$) {};
\path[grouptrans] (a12.south) .. controls (control) .. (b12.south);

\path[->] (q1) edge [bend right=25] node[pos=0.07,anchor=south] (c1) {} 
node [pos=0.6,swap] {$o_1,\frac12$} (q0);
\path[->] (q1) edge [bend left=25] node[pos=0.07,anchor=south] (d1) {} 
node[pos=0.6] {$o_1,\frac12$} (q20);

\path[->] (q0) edge [bend left=30] node[pos=0.07,anchor=south] (a2) {}
node [swap] {$a,\frac14$} (q2); 
\path[->] (q0) edge [bend left=45]
node[pos=0.07,anchor=south] (b2) {} node {$b,\frac34$} (q2); 
\path[->] (q20) edge [bend right=30] node[pos=0.07,anchor=south] (a22) {} 
node {$a,\frac34$} (q2); 
\path[->] (q20) edge [bend right=45]
node[pos=0.07,anchor=south] (b22) {} node [swap] {$b,\frac14$} (q2);

\node (control) at ($(barycentric cs:a2=1,b2=1) + (0.10,-0.10)$) {};
\path[grouptrans] (a2.south) .. controls (control) .. (b2.south);

\node (control) at ($(barycentric cs:a22=1,b22=1) + (0.10,-0.10)$) {};
\path[grouptrans] (a22.south) .. controls (control) .. (b22.south);

\path[->] (q2) edge [bend left=25] node[pos=0.07,anchor=south] (c2) {}
node {$o_2,\frac12$} (q0); 
\path[->] (q2) edge [bend right=25] node[pos=0.07,anchor=south] (d2) {} 
node [swap] {$o_2,\frac12$} (q20);

\node (control) at ($(barycentric cs:c2=1,d2=1) + (-0.10,-0.10)$) {};
\path[grouptrans] (c2.south) .. controls (control) .. (d2.south);

\node (control) at ($(barycentric cs:c1=1,d1=1) + (0.10,-0.10)$) {};
\path[grouptrans] (c1.south) .. controls (control) .. (d1.south);

\node[node distance=0.4cm,below of=a1] (act1) {$\alpha_1$};
\node[node distance=0.4cm,below of=a2] (act2) {$\alpha_2$};
\node[node distance=0.4cm,above of=b22] (act12) {$\alpha_1$};
\node[node distance=0.4cm,above of=b12] (act22) {$\alpha_2$};

\node[node distance=1.4cm,right of=q1] (bet1) {$\beta_1$};
\node[node distance=1.4cm,left of=q2] (bet2) {$\beta_2$};

\end{tikzpicture}
\caption{A Markov Decision process.}
\label{fig:pomdp}
\end{figure}

The definition could be extended with an initial distribution instead
of an initial state, but we restrict to this one for the sake of
simplicity. When $\Delta(q,\alpha)$ is defined, $\alpha$ is said to be
\emph{enabled} in state $q$. Intuitively, in an execution of an MDP,
from a given state $q$, an action $\alpha\in \Act$ enabled in $q$ is
chosen non deterministically, and then the next label in $\Sigma$ and
the next state are chosen according to the probability distribution
$\Delta(q,\alpha)$.  Formally, a (finite or infinite) run of an MDP is
a sequence $\rho=q_0\cdot (\alpha_0,a_0)\cdot q_1\cdot
(\alpha_1,a_1)\cdot q_2\cdot \ldots \in Q\cdot
((\Act\times\Sigma)\cdot Q)^\infty$, also written
$q_0\xrightarrow{\alpha_0,a_0}q_1\xrightarrow{\alpha_1,a_1}q_2\ldots$
such that, for all $i\geq 0$, $\alpha_i$ is enabled in $q_i$ and
$\Delta(q_i,\alpha_i)(a_i,q_{i+1})>0$. The trace of $\rho$ is the word
$(\alpha_0,a_0)(\alpha_1,a_1)\ldots$ over $\Act\times\Sigma$ labelling
the run, obtained by projecting away the visited states. The
length of $\rho$, denoted by $|\rho|$, is the length of its trace in
$\N \cup \{\infty\}$. The set of infinite (resp. finite) runs of an
MDP $\A$ is denoted by $\runs{\A}$ (resp. $\runsf{\A}$).  The set of
traces of infinite runs of $\A$ is denoted by $\Tr{\A}$ and
$\trace:\runs{\A}\rightarrow \Tr{\A}$ is the mapping that associates
with each run its trace. For a run $\rho$, and $i < |\rho|$, we denote
by $\rho_i$ the finite run consisting of its first $i$ transitions,
and we say that $\rho_i$ is a prefix of $\rho$.

\smallskip The non determinism of MDPs is resolved by a scheduler,
that gives a probability distribution over the different actions in
$\Act$ along each finite run.
\begin{definition}[Scheduler]
  A \emph{scheduler} on $\mathcal{A} = (Q,\Sigma, \Act, \Delta,q_0)$
  is a function $\sigma : \runsf{\mathcal{A}}\rightarrow
  \mathcal{D}(\Act)$ such that, for any finite run $\rho =
  q_0\xrightarrow{\alpha_0,a_0} \ldots
  \xrightarrow{\alpha_{n-1},a_{n-1}}q_n$ of $\A$, for all
  $\alpha\in\Act$, if $\sigma(\rho)(\alpha)>0$ then $\alpha$ is
  enabled in $q_n$.
\end{definition}

A scheduler is \emph{deterministic} if $\sigma :
\runsf{\mathcal{A}}\rightarrow \Act$.  We say that a scheduler has
\emph{finite memory} if its decision
%doesnot depend on the whole run history. 
only depends on a finite set of so-called \emph{memory states}.
Similarly, a scheduler is \emph{memoryless} if its decision depends
only on the last state of the run.  Formally, they are defined as
follows.

\begin{definition}[Finite-Memory Schedulers]
  A \emph{finite-memory scheduler} on $\mathcal{A} = (Q,\Sigma, \Act,
  \Delta,q_0)$ is given by a tuple $(M,m_0,\sigma,\sigma_{up})$ where
  $M$ is a finite set of memory states, $m_0$ is the initial memory
  state, $\sigma : M\times Q\rightarrow \mathcal{D}(\Act)$ is a
  mapping such that, for all $m\in M$, for all $q\in Q$, and for all
  $\alpha\in\Act$, if $\sigma(m,q)(\alpha)>0$ then $\alpha$ is enabled
  in $q$. Finally, $\sigma_{up}:M\times \Act\times \Sigma\times
  Q\rightarrow \mathcal{D}(M)$ is the memory update function.
  
  If $|M|=1$, then the scheduler, reduced to $\sigma :
  Q\rightarrow \mathcal{D}(\Act)$ is \emph{memoryless}.
\end{definition}

%Finite-memory schedulers are particular types of schedulers, 
%since, for any given $\rho\in\runsf{\A}$, one can compute the
%corresponding memory state thanks to the memory update function.
%Finite-memory schedulers can also be deterministic; in that case, 
%both the mapping
%$\sigma$ and the memory update function $\sigma_{up}$ are deterministic.

\smallskip In some systems, the underlying state is only partially
observable. Those are modeled by Partially Observable MDPs, in which
some sets of states are undistinguishable for external observers
(including the scheduler):

\begin{definition}
  A partially observable Markov decision process (POMDP) is an MDP
  $\mathcal{A}=(Q,\Sigma,\Act,\Delta,q_0)$ equipped with an
  equivalence relation $\sim$ over $Q$ such that if $p \sim q$ then
  the set of actions from $\Act$ enabled in $p$ and $q$ are the same.
\end{definition}
In that case, given two sequences of states $p_0\cdots p_n$ and
$q_0\cdots q_n$, we say that $p_0\cdots p_n\sim q_0\cdots q_n$ if and
only if $p_i\sim q_i$ for all $0\leq i\leq n$. In a POMDP, the
scheduler cannot distinguish between equivalent states. The scheduler
definition is then modified:

\begin{definition}
  Let $\mathcal{A}=(Q,\Sigma,\Act,\Delta,q_0)$ be a POMDP with
  equivalence relation $\sim\subseteq Q\times Q$. An observation-based
  scheduler (or $\sim$-scheduler) is a scheduler $\sigma :
  \runsf{\mathcal{A}}\rightarrow \mathcal{D}(\Act)$ such that for any
  two finite runs $\rho= q_0\xrightarrow{\alpha_0,a_0} \ldots
  \xrightarrow{\alpha_{n-1},a_{n-1}}q_n$ and $\rho' =
  p_0\xrightarrow{\alpha_0,b_0} \ldots
  \xrightarrow{\alpha_{n-1},b_{n-1}}p_n$ with same length, if
  $p_0\cdots p_n\sim q_0\cdots q_n$, then
  $\sigma(\rho)=\sigma(\rho')$.
\end{definition}

For instance, associating with the MDP of Figure~\ref{fig:pomdp} the
three equivalence classes $\{q_0, q'_0\}$, $\{q_1\}$ and $\{q_2\}$,
produces a POMDP. In this case, the scheduler cannot know if it is in
$q_0$ or in $q'_0$ when it chooses action $\alpha_1$ or $\alpha_2$.

\smallskip Recall that, given a POMDP $\A$ and a scheduler $\sigma$, a
probability measure $\Prob_\sigma$ can be defined on
$\runs{\A}$\cite{BillingsleyMeasure95}: first it is defined on
\emph{cones}, where the cone $C_\rho$ associated with a finite run
$\rho$ is the subset of infinite runs in $\runs{\A}$ having $\rho$ as
prefix; and then it is extended to measurable sets of infinite
runs. If $ \rho= q_0\xrightarrow{\alpha_0,a_0} \ldots
\xrightarrow{\alpha_{n-1},a_{n-1}}q_n$, the probability of $C_\rho$ is
defined by:
\begin{align*}
\Prob_\sigma(C_\rho) &=
\sigma(\rho_0)(\alpha_0)\times \Delta(q_0, \alpha_0)(a_0,q_1)\times
\ldots\times\\
& \sigma(\rho_{n-1})(\alpha_{n-1})\times\Delta(q_{n-1},
\alpha_{n-1})(a_{n-1},q_n)
\end{align*}

\section{Opacity and disclosure}\label{sec:op}

The notion of opacity was originally defined in~\cite{bryans08} for a
(non probabilistic) transition system, with respect to some external
observation function and some predicate (the secret) on the runs of
the system. It extends trivially to probabilistic transition systems.
In this case, given an MDP $\A=(Q,\Sigma,\Act,\Delta,q_0)$, we
consider a predicate $\varphi \subseteq \runs{\A}$, given as an
$\omega$-regular language (the secret).  The observation the attacker
has of the runs of the MDP is defined by a morphism $\obs :
\runs{\mathcal{A}}\rightarrow\Gamma^\infty$ obtained from a given
application $\pi : Q\cup (\Act \times\Sigma)\rightarrow
\Gamma\cup\{\varepsilon\}$, where $\Gamma$ is a finite alphabet. The
morphism $\obs$ is the observation function, and the elements of
$\observable=\obs(\runs{\mathcal{A}})$ are the observables.
%$\observable=\Gamma^\omega$ of \emph{observables}, where $\Gamma$ is a
%finite alphabet, and an \emph{observation function} $\obs :
%\runs{\mathcal{A}}\rightarrow\observable$, obtained as a morphism from
%a given application $\pi:Q\cup(\Act \times\Sigma)\rightarrow
%\Gamma\cup\{\varepsilon\}$. 
For a given run $\rho$, every run in
$\obs^{-1}(\obs(\rho))$ -- its observation class -- is
undistinguishable from $\rho$.  The predicate is \emph{opaque} in $\A$
for $\obs$ if each time a run satisfies the predicate, another run in
the same observation class does not. Formally, we let
$\overline{\varphi}=\runs{\mathcal{A}}\setminus\varphi$, and define
opacity as follows.

\begin{definition}[Opacity]
  Let $\mathcal{A}$ be an MDP, with observation function
  $\obs:\runs{\mathcal{A}}\rightarrow\observable$. A predicate
  $\varphi\subseteq\runs{\mathcal{A}}$ is said to be \emph{opaque} in
  $\mathcal{A}$ for $\obs$ if $\varphi\subseteq
  \obs^{-1}(\obs(\overline{\varphi}))$.
 \end{definition}

 Variants of opacity have been defined, by modifying the observation
 function or the predicate, or by requiring symmetry: the predicate
 $\varphi$ is \emph{symmetrically opaque} in $\mathcal{A}$ for $\obs$
 if both $\varphi$ and $\overline{\varphi}$ are opaque.

 Note that $\varphi$ is opaque if and only if for any
 $o\in\observable$, $\obs^{-1}(o)\not\subseteq \varphi$. By extension,
 we say that an observation class $\obs^{-1}(o)$, for $o \in
 \observable$, is itself \emph{opaque} if $\obs^{-1}(o)\not\subseteq
 \varphi$, and we define $\observableopaque$ as the set of
 corresponding observations, with
 $\observableleak=\observable\setminus\observableopaque=\{o\in
 \observable \mid \obs^{-1}(o)\subseteq \varphi\}$.

%  In the sequel, for the probabilistic setting, we assume that
%  the observation function is the morphism $\obs: \runs{\mathcal{A}}
%  \rightarrow \Gamma^\omega$, obtained from a given application
%  $\pi:Q\cup(\Act \times\Sigma)\rightarrow \Gamma\cup\{\varepsilon\}$,
%  where $\Gamma$ is a finite alphabet. Moreover, we assume that the
%  predicate $\varphi$ is $\omega$-regular.

 \smallskip For instance, for the POMDP in Figure~\ref{fig:pomdp}
 above, we can define:
\begin{itemize}
\item an observation function $\obs$ from the projection $\pi$ such
 that $\pi(q) = \varepsilon$ for any $q \in Q$, $\pi(\alpha,o_1)=o_1$,
 $\pi(\alpha,o_2)=o_2$ and $\pi(\alpha,a)=\pi(\alpha,b)=\varepsilon$,
 for any $\alpha \in Act$, 
\item a predicate $\varphi$ as the set of all runs with trace in
  $(ab)^{\omega}$, where the labels $a$s and $b$s strictly alternate.
\end{itemize}

When a probabilistic system is non opaque, we are interested in
quantifying the security hole. One of the measures proposed
in~\cite{berard13} for Markov chains, is the probability of the set of
runs violating opacity. With this measure of non opacity, called here
\emph{Probabilistic Disclosure} and extended to MDPs with infinite
runs, it becomes possible to compare non opaque systems.  The measure,
computed in a worst case scenario, corresponds to the maximal
probability of disclosure over all possible schedulers. More
precisely:

 \begin{definition}[Probabilistic Disclosure]
   Let $\mathcal{A}$ be an MDP, with observation function
   $\obs:\runs{\mathcal{A}}\rightarrow\observable$, let
   $\varphi\subseteq\runs{\mathcal{A}}$ be a predicate and let
   $\sigma$ be a scheduler. The probabilistic disclosure of $\varphi$
   in $\A$ scheduled by $\sigma$ is: 

\begin{align*}\PrDiscl_\sigma(\varphi,\A, \obs)&=
   \Prob_\sigma(\varphi \setminus
   \obs^{-1}(\obs(\overline{\varphi}))) \\
   &= \sum_{o \in \observableleak}
 \Prob_\sigma(\obs^{-1}(o)).
 \end{align*}
 
 The probabilistic disclosure of $\varphi$ in $\A$ is
 $\PrDiscl(\varphi,\A, \obs)=\sup_\sigma \{\PrDiscl_\sigma(\varphi,\A,
 \obs)\}$.
 \end{definition}

 \begin{remark}
   Note that the probabilistic disclosure is well defined, since, when
   $\varphi$ is $\omega$-regular, and $\obs$ is a morphism as assumed
   above, the set of runs
   $\varphi\setminus\obs^{-1}(\obs(\overline{\varphi}))$ is
   measurable. Indeed, the class of $\omega$-regular languages is
   closed by complement, intersection, morphism and inverse morphism.
   Hence, the set
   $\varphi\setminus\obs^{-1}(\obs(\overline{\varphi}))$ is
   $\omega$-regular, thus measurable \cite{Vardi85}.
 \end{remark}

 \smallskip Questions we aim to address are the following:
\begin{enumerate}
\item The value problem: What is the value of the probabilistic
  disclosure of the system?
\item The general disclosure problem:\\ Is the value of the
  probabilistic disclosure of the system greater than some given
  threshold (\textit{i.e.} for $\delta\in [0,1]$,
  $\PrDiscl(\varphi,\A,\obs)>\delta$)?
\item The almost-sure opacity problem:\\ Is the system almost surely opaque\\
(i.e. $\PrDiscl(\varphi,\A,\obs)=0$)? 
\item The limit disclosure problem:\\ Is
  $\PrDiscl(\varphi,\A,\obs)=1$?
\item The almost-sure disclosure problem: Does there exist a scheduler
  $\sigma$ such that $\PrDiscl_\sigma(\varphi,\A,\obs)=1$?
  % \item The finite memory almost-sure disclosure problem: Does there
  %   exist a finite memory scheduler $\sigma$ such that
  %   $\PrDiscl_\sigma(\varphi,\A,\obs)=1$?
\end{enumerate}
All these problems can be considered with a restriction to
finite-memory schedulers.  The last three questions refer to
qualitative aspects of the problem, while the two first ones concern
quantitative properties.  In the next section, we show that recent
results on MDPs (with partial or perfect observation) allow us to
answer such questions on probabilistic disclosure of the systems. More
precisely, we prove that all these questions are decidable under
perfect observation, while they are undecidable under partial
observation. However, we identify restrictions that allow to decide
the last problem.

\section{Results}\label{sec:res}    

\subsection{MDPs and Schedulers with Perfect Observation}

\begin{theorem}
  Given an MDP $\A$, an $\omega$-regular secret $\varphi$ given as a
  DPA (deterministic parity automaton), and observation function $\obs$ as a 
%classes $\observable$ obtained by some 
  morphism, the value is computable, in polynomial time in the size of
  $\A$, and exponential in the size of $\varphi$.
\end{theorem}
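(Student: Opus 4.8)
The plan is to reduce the value computation to a standard optimization over an MDP with an $\omega$-regular objective. By definition $\PrDiscl(\varphi,\A,\obs)=\sup_\sigma\Prob_\sigma(D)$ where $D=\varphi\setminus\obs^{-1}(\obs(\overline{\varphi}))$ is a \emph{single}, scheduler-independent set, which is $\omega$-regular by the Remark above. Since the scheduler has perfect observation, $\sup_\sigma\Prob_\sigma(D)$ is exactly the maximal probability of satisfying $D$ in $\A$, and this is computable in time polynomial in the size of the product of $\A$ with any deterministic parity automaton recognising $D$ (the classical result on MDPs with parity objectives under complete observation). Thus the whole problem reduces to (i) building a DPA $\mathcal{D}$ for $D$, and (ii) controlling the size of $\mathcal{D}$ so that the final product-and-solve step meets the announced bounds.

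First I would simplify $D$ and then assemble $\mathcal{D}$ from the $\omega$-regular closure operations of the Remark, tracking which operation causes a blow-up. Writing $\observableleak=\observable\setminus\obs(\overline{\varphi})$, one checks that $o\in\observableleak$ iff $\obs^{-1}(o)\cap\overline{\varphi}=\emptyset$ iff $o\notin\obs(\overline{\varphi})$; moreover $\obs^{-1}(\observableleak)\subseteq\varphi$ by definition of $\observableleak$, so $D=\obs^{-1}(\observableleak)$ and $\mathcal{D}$ need only read a run $\rho$, form its image $\obs(\rho)$ on the fly by relabelling each symbol $x$ through $\pi(x)\in\Gamma\cup\{\eps\}$, and accept exactly when $\obs(\rho)\in\observableleak$. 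To obtain a deterministic automaton over $\Gamma$ for $\observableleak$, I would complement the given DPA for $\varphi$ (shift every priority by one, same size), take its synchronous product with $\A$ to restrict to $\runs{\A}\cap\overline{\varphi}$, and apply the morphism $\obs$ — removing the $\eps$-moves — to get a \emph{nondeterministic} parity automaton over $\Gamma$ accepting $\obs(\overline{\varphi})$; determinising this (a Safra-type construction) and complementing yields the required DPA for $\observableleak$.

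With this automaton in hand, step (ii) is routine: feed the symbols of $\A$ through $\pi$ into the $\observableleak$-DPA, form the product MDP $\A\times\mathcal{D}$ whose parity objective encodes membership in $D$, and compute its maximal acceptance probability by the standard polynomial-time algorithm for MDPs with parity conditions. The value returned is $\PrDiscl(\varphi,\A,\obs)$.

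The hard part is the determinisation-and-complementation that produces the automaton for $\observableleak$, and it is delicate for two independent reasons. On the correctness side, $\obs$ erases symbols, so the projected automaton carries $\eps$-transitions and the observables live in $\Gamma^{\infty}=\Gamma^{*}\cup\Gamma^{\omega}$; one must treat separately the runs whose observation is finite (those that eventually emit only $\eps$) from those with an infinite observation, and design the acceptance condition of the determinised automaton so that ``every observation-equivalent run of $\A$ is secret'' is faithfully captured on both kinds of observables. On the complexity side, this determinisation is the only exponential operation in the construction, and the entire complexity claim rests on arranging it so that the blow-up is confined to the $\varphi$-component: $\mathcal{D}$ must come out exponential in $|\varphi|$ while staying polynomial in $|\A|$, so that the product $\A\times\mathcal{D}$ and the parity-MDP computation are polynomial in the size of $\A$ and exponential in the size of $\varphi$, as claimed.
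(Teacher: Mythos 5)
Your overall route is the same as the paper's: identify the disclosure set $D=\varphi\setminus\obs^{-1}(\obs(\overline{\varphi}))$ as a single, scheduler-independent $\omega$-regular set (your identity $D=\obs^{-1}(\observableleak)$ is correct), represent it by a deterministic parity automaton, and solve the resulting parity MDP with the standard polynomial-time algorithms --- the paper's proof is exactly this, asserted in two lines. So the computability claim is established. The genuine gap is the complexity bound, and your own text locates it without closing it: in your construction, the automaton that gets determinized is the $\obs$-projection of the product of $\A$ with the complemented secret automaton, hence has $O(|\A|\cdot|\A_\varphi|)$ states, and Safra determinization then yields an automaton exponential in $|\A|$ as well as in $|\A_\varphi|$; the final product-and-solve step is therefore exponential in $|\A|$. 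Saying that the blow-up ``must be confined to the $\varphi$-component'' states the requirement of the theorem, but nothing in your construction meets it. What you have actually proved is computability (with an exponential upper bound in $|\A|\cdot|\A_\varphi|$), not the claimed ``polynomial in the size of $\A$, exponential in the size of $\varphi$''.

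Moreover, this gap is not closable by a cleverer automaton construction, because the dependence on $\A$ inside the determinization is inherent: $\overline{\varphi}$ and the observation classes are defined relative to $\runs{\A}$, so deciding whether an observation lies in $\observableleak$ requires tracking the set of states of $\A$ (not just of $\A_\varphi$) reachable along observation-equivalent runs. Concretely, one can reduce NFA universality to deciding $\PrDiscl(\varphi,\A,\obs)>0$: let $\A$ simulate a complete NFA, with the scheduler choosing input letters and the nondeterministic successor drawn uniformly at random, let $\pi$ erase states and keep letters (plus an end-marker $\#$ leading to a sink), and let $\varphi$ be the constant-state DPA secret ``an end-marker occurs and the $\A$-state seen just before the first end-marker is non-accepting''. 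An observation class $w\#^\omega$ is then entirely secret iff no run of the NFA on $w$ accepts, so $\PrDiscl>0$ iff the NFA is not universal. Hence the value problem is PSPACE-hard in $|\A|$ even for secrets with $O(1)$ states, and no algorithm polynomial in $|\A|$ exists unless $\mathrm{P}=\mathrm{PSPACE}$. You should not feel you missed an idea that the paper supplies: its proof (``Immediate, since \dots'') silently skips exactly this determinization step, as well as the finite-versus-infinite observation issue caused by $\varepsilon$-erasure that you also flagged; your more explicit attempt has in fact surfaced a real problem with the stated complexity, not a defect of your own argument.
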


\begin{proof}
  Immediate, since $\varphi \setminus
  \obs^{-1}(\obs(\overline{\varphi}))$ is $\omega$-regular and can be described
  as a DPA, and from the results of~\cite{courcoubetis95,CJH04,CH11} 
  for solving MDPs with parity conditions.  \qed\end{proof}

From this theorem, it follows that:
\begin{corollary}
  The general disclosure, the limit disclosure problem, and the
  almost-sure opacity problem are decidable.
\end{corollary}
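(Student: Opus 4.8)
The plan is to reduce all three questions to a single exact numerical comparison, exploiting the fact that the preceding theorem already computes the value $v = \PrDiscl(\varphi,\A,\obs)$. The three problems listed are exactly those whose answer is a predicate on $v$ alone: almost-sure opacity asks whether $v = 0$, the limit disclosure problem asks whether $v = 1$, and the general disclosure problem asks, for a rational threshold $\delta$, whether $v > \delta$.

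First I would invoke the theorem to obtain $v$. Recall from its proof that the set $\varphi \setminus \obs^{-1}(\obs(\overline{\varphi}))$ is $\omega$-regular and can be presented as a DPA, so evaluating $\PrDiscl$ reduces to computing the optimal value of an MDP with a parity objective. Since parity MDPs admit optimal memoryless deterministic schedulers and their optimal value is a rational number obtained exactly by the algorithms of \cite{courcoubetis95,CJH04,CH11}, the supremum defining $\PrDiscl$ is attained and $v$ is produced as an exact rational.

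Given $v$ as an exact rational, each problem is then settled by a direct comparison: test $v = 0$ for almost-sure opacity, $v = 1$ for limit disclosure, and $v > \delta$ for general disclosure. The only point requiring care --- and the nearest thing to an obstacle --- is that $v$ must be available as an exact number rather than merely approximated, since these equality and strict tests could not be settled by an approximation; this is precisely what the theorem guarantees, as the optimal value of a parity MDP is rational and exactly computable. Hence all three comparisons, and with them the three problems, are decidable.
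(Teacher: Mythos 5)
Your proposal is correct and follows exactly the paper's (implicit) argument: the corollary is stated as an immediate consequence of the preceding theorem, since once the value $\PrDiscl(\varphi,\A,\obs)$ is exactly computable --- as a rational, because parity MDPs admit optimal memoryless deterministic schedulers with rational, exactly computable values --- each of the three problems reduces to a direct comparison of that value with $0$, $1$, or the threshold $\delta$. Your observation that exact (not approximate) computation is what makes the equality and strict-inequality tests decidable is precisely the point that justifies the paper's ``it follows'' step (with the standard tacit assumption that $\delta$ is given as a rational).
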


Moreover, since it is sufficient to consider memoryless deterministic
schedulers for MDPs with parity conditions~\cite{CJH04},
$\sup_\sigma\{\PrDiscl_\sigma(\varphi,\A,\obs) \}=1$ if and only if
there exists a memoryless scheduler $\sigma$ such that
$\PrDiscl_\sigma(\varphi,\A,\obs)=1$. The following result is then
obtained.

\begin{corollary}
The almost-sure disclosure problem is decidable.
\end{corollary}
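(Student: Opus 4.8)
The plan is to reduce the almost-sure disclosure problem to deciding almost-sure satisfaction of a parity objective in a finite MDP, and then to invoke the same structural results on parity MDPs that underlie the preceding theorem. First I would note that, for a fixed scheduler $\sigma$, $\PrDiscl_\sigma(\varphi,\A,\obs)=\Prob_\sigma(W)$ with $W=\varphi\setminus\obs^{-1}(\obs(\overline{\varphi}))$. By the Remark above, $W$ is $\omega$-regular and admits a description as a DPA; taking the synchronous product of $\A$ with this DPA produces a finite MDP $\B$ carrying a parity acceptance condition, in such a way that a scheduler achieves $\PrDiscl_\sigma=1$ in $\A$ exactly when the induced scheduler satisfies the parity objective of $\B$ with probability $1$.

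The crucial step is the passage from the existential formulation (``does some scheduler attain value $1$?'') to the value formulation (``is the supremum equal to $1$?''). For arbitrary measurable objectives the two can differ, since a supremum equal to $1$ might only be approached asymptotically and never realized by a single scheduler. Here the objective is a parity condition, and by~\cite{CJH04} parity MDPs admit pure memoryless optimal strategies; in particular $\sup_\sigma\{\PrDiscl_\sigma(\varphi,\A,\obs)\}$ is attained. Hence this supremum equals $1$ if and only if there is a memoryless deterministic scheduler $\sigma$ with $\PrDiscl_\sigma(\varphi,\A,\obs)=1$, which is exactly the equivalence recorded immediately before the statement.

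It then suffices to decide whether the value $\PrDiscl(\varphi,\A,\obs)$ equals $1$, i.e. the limit disclosure problem, which is already known to be decidable by the corollary following the preceding theorem; concretely one computes the optimal parity value on $\B$ using the algorithms of~\cite{courcoubetis95,CJH04,CH11} and compares it with $1$. Combining these observations yields decidability of the almost-sure disclosure problem.

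I expect the main obstacle to lie not in the product construction but in justifying the attainment of the supremum: the whole argument hinges on the existence of an optimal memoryless strategy for parity MDPs, which is what converts the merely \emph{computable} value of the preceding theorem into a genuine witnessing scheduler realizing disclosure $1$.
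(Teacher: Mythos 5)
Your proposal is correct and follows essentially the same route as the paper: the paper likewise derives this corollary from the product-with-a-DPA construction of the preceding theorem together with the fact, via~\cite{CJH04}, that memoryless deterministic schedulers suffice for parity MDPs, so that the supremum is attained and almost-sure disclosure reduces to deciding whether the (computable) value equals $1$. Your emphasis on attainment of the supremum as the crux is exactly the equivalence the paper records immediately before the corollary.
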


Note that this result can be applied to symmetrical opacity. It can
also be extended to the case considered in~\cite{berard13} with an
observation function $\obs$ (not necessarily a morphism) producing a
finite number of observation classes such that for each $o \in
\observable$, $\obs^{-1} (o)$ is $\omega$-regular.

\subsection{POMDPs and Observation-based Schedulers}

\begin{theorem}\label{th:undecidable}
  Given a POMDP $\A$, and a morphism $\obs$ for the observation function, 
%  observation classes $\observable$ obtained by some morphism $\obs$,
  \begin{enumerate}
  \item the almost-sure disclosure problem is undecidable for secrets given 
  as DCA, DPA.
\item the almost-sure opacity problem is undecidable for secrets given
  as DBA, DPA.
\item the limit disclosure problem, the general disclosure problem,
  and the value problem are undecidable, for secrets given as DBA,
  DCA, DPA, both with general and finite memory schedulers.
  \end{enumerate}
\end{theorem}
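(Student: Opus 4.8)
The plan is to establish each undecidability result by reduction from a known undecidable problem on POMDPs, exploiting the fact that probabilistic disclosure is fundamentally a reachability-type quantity over observation classes that a scheduler tries to maximize under partial information. The canonical undecidable problems for POMDPs are the emptiness and value problems for probabilistic automata (Paz/Rabin), and the qualitative (almost-sure and positive) reachability and B\"uchi problems for POMDPs, which are known to be undecidable. First I would recall the precise statement of whichever of these serves as the cleanest source: for the quantitative items (the value, general disclosure and limit disclosure problems), the undecidability of the value problem for probabilistic automata is the natural engine, while for the almost-sure items the undecidability of qualitative (almost-sure) winning for a B\"uchi or co-B\"uchi objective under partial observation is the right source.

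The core of the argument is a gadget construction that encodes a given POMDP control problem into an opacity instance. Given an input instance with its objective, I would build a POMDP $\A$ together with a morphism $\obs$ and a secret $\varphi$ so that a run lands in a leaking observation class (one with $\obs^{-1}(o)\subseteq\varphi$) exactly when the underlying run witnesses the target behavior in the original instance. Concretely, one arranges the observation so that the attacker sees a canonical observable $o$, and one designs $\varphi$ so that $\obs^{-1}(o)\subseteq\varphi$ holds precisely on the runs realizing the encoded acceptance condition. Then $\PrDiscl_\sigma(\varphi,\A,\obs)=\sum_{o\in\observableleak}\Prob_\sigma(\obs^{-1}(o))$ becomes, by construction, the probability that $\sigma$ achieves the encoded objective. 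Because the scheduler is observation-based, its information is exactly the partial information of the source problem, so maximizing disclosure over $\sim$-schedulers coincides with optimal control in the source. The acceptance type of the secret (DBA, DCA, or DPA) is dictated by which objective the source problem uses: a B\"uchi source objective yields a secret expressed as a DBA, a co-B\"uchi source yields a DCA, and the parity case subsumes both; this is how I would account for the case-by-case alphabet of automaton types listed in items (1), (2) and (3).

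With the reduction in hand, each item follows by transporting the known hardness: the value and threshold questions reduce to the undecidable value/threshold questions for probabilistic automata, giving item (3), including its finite-memory variant since the value problem remains undecidable even when restricting to finite-memory strategies; the limit disclosure problem ($\PrDiscl=1$) reduces to a supremum-one question, again undecidable; the almost-sure opacity problem ($\PrDiscl=0$) reduces to the question whether \emph{no} scheduler can force the encoded bad event with positive probability, which is an almost-sure safety/co-B\"uchi control problem under partial observation, known undecidable, yielding item (2); and the almost-sure disclosure problem (existence of $\sigma$ with $\PrDiscl_\sigma=1$) reduces to almost-sure winning for a reachability or co-B\"uchi objective in a POMDP, undecidable, yielding item (1). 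Care must be taken to match each problem to a source whose objective is exactly DBA, DCA or DPA expressible, so that the secret can be presented in the claimed form.

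The main obstacle will be engineering a single uniform gadget in which the leaking condition $\obs^{-1}(o)\subseteq\varphi$ is controlled cleanly by the encoded objective without introducing spurious non-leaking runs in the same observation class. The subtlety is that disclosure counts an observation class only when \emph{every} run with that observation is secret, so the reduction must ensure that all runs sharing the distinguished observable either uniformly satisfy or uniformly fail the encoded acceptance condition; otherwise a single non-secret companion run would wrongly make the class opaque and break the correspondence. I would handle this by making the observation function collapse enough of the trace to a fixed observable while letting $\varphi$ depend only on a component that is determined by the scheduler's choices and the source automaton, so that the secret membership of a run is a function of its observation together with the (deterministic) evaluation of the encoded objective. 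Verifying this invariant precisely—together with confirming that the encoding preserves the distinction between \emph{sup over all schedulers} and \emph{existence of a single scheduler}, which separates items (1)/(5) from (2)/(3)—is the delicate part of the proof.
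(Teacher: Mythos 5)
Your overall architecture is the same as the paper's: a reduction from known undecidable problems on POMDPs and probabilistic automata \cite{BaierGB12,CDGH10,Rabin63,PazBook,GO10}, and you correctly isolate the crux, namely that an observation class counts toward the disclosure only if \emph{every} run in it is secret. But the proposal has two genuine gaps. First, the gadget enforcing this uniformity is never constructed, and the mechanism you sketch is internally in tension: you want the observation to collapse runs onto a ``canonical observable'' while secret membership still depends on ``the evaluation of the encoded objective''; if that evaluation is not itself determined by the observation, two runs in the same class can disagree on membership, a single non-secret companion makes the class opaque, and the identity $\PrDiscl_\sigma(\varphi,\A,\obs)=\Prob_\sigma(\Acc(F))$ fails---exactly the failure mode you flag and then defer. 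The paper's fix is concrete: from a fresh initial state $q_\iota$, one action leads with probability $\frac12$ each into two scheduler-indistinguishable copies $\A_1,\A_2$ of $\A$; the secret is $\varphi=\bigl(q_\iota\cdot(\alpha_\iota,a_1)\cdot\Acc(F_1)\bigr)\cup\bigl(q_\iota\cdot(\alpha_\iota,a_2)\cdot\runs{\A_2}\bigr)$, i.e.\ \emph{all} runs entering the second copy are secret, and $\obs$ erases the copy index. An observation class then consists of the two copies of one run of $\A$, and it lies inside $\varphi$ iff the $\A_1$-copy is accepting; since any $\sim'$-scheduler must behave identically in the two copies, $\PrDiscl_{\sigma'}(\varphi,\A',\obs)=\Prob_\sigma(\Acc(F))$ for the corresponding scheduler $\sigma$ of $\A$, in both directions of the correspondence. (A workable alternative closer to your sketch would be to make secret membership a function of the observation \emph{alone}, e.g.\ by letting $\pi$ reveal exactly the visits to $F$, so that every class is uniformly secret or uniformly non-secret and $\PrDiscl_\sigma=\Prob_\sigma(\varphi)$; but your text never commits to this.)

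Second, your matching of items to source problems is partly wrong in a way that would sink the proof if followed literally. For POMDPs, almost-sure reachability, almost-sure B\"uchi and almost-sure safety are all \emph{decidable} (EXPTIME-complete, cf.\ \cite{CDH-mfcs10,BaierGB12}), so reducing from ``almost-sure winning for a reachability\dots objective'' (your item 1) or from an ``almost-sure safety\dots control problem'' (your item 2) establishes nothing. The correct undecidable sources are: almost-sure co-B\"uchi, i.e.\ $\exists\sigma$ with $\Prob_\sigma(\coBuchi{F})=1$, for item 1---which is precisely why that item is stated for DCA/DPA and not DBA, the DBA case being decidable by Theorem~\ref{th:DBA}---and positive B\"uchi, i.e.\ the complement of $\forall\sigma:\Prob_\sigma(\Buchi{F})=0$, for item 2, which is why that item is stated for DBA/DPA and not DCA. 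Getting this correspondence exactly right is not a deferrable detail: it is what produces the DBA/DCA/DPA case split in the theorem. Your treatment of item 3 (threshold and value problems from probabilistic finite automata \cite{Rabin63,PazBook} and the value-1 problem \cite{GO10}, with the finite-memory restriction carrying over because those results concern finite words) is sound and matches the paper.
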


\begin{proof}

\begin{figure}[htb]
\centering
\begin{tikzpicture}[auto,node distance=3cm]
  \tikzstyle{every state}+=[shape=ellipse,minimum size=5pt,inner
  sep=2pt,initial text=] 
  \tikzstyle{automate}=[inner sep=10pt, draw,rectangle,rounded corners=5pt,minimum height=35pt,minimum width=45pt]
\tikzstyle{grouptrans}=[draw,very thick]
  \node[state,initial above] (q0) at (0,1.75) {$q_\iota$}; 
  \node[automate, label= left: $\A_1$] (q1) at (-2.,0) {\begin{tikzpicture}[node distance=1cm]\node[minimum size = 1pt,inner sep=0.5pt](center){};\node[dashed,inner sep=5pt,state, below right of =center]{$F_1$};\end{tikzpicture}}; 
  \node[automate, label = right: $\A_2$] (q2) at (2.,0) {\begin{tikzpicture}[node distance=1cm]\node[minimum size = 1pt,inner sep=0.5pt](center){};\node[dashed,inner sep=5pt,state, below left of =center]{$F_2$};\end{tikzpicture}};

\path[->] (q0) edge [bend left=20,pos=0.4] node[pos=0.2,anchor=south]
(a1) {} node [swap] {$a_1,\frac12$} (q1); 

\path[->] (q0) edge [bend right=20,pos=0.4]
node[pos=0.2,anchor=south] (a2) {} node {$a_2,\frac12$} (q2);

\node (control) at ($(barycentric cs:a1=1,a2=1) + (0,-0.2)$) {};
\node (lab) at (0, 1) {$\alpha_\iota$};
\path[grouptrans] (a1.south) .. controls (control) .. (a2.south);

\end{tikzpicture}
\caption{MDP $\A'$ from two copies $\A_1$ and $\A_2$ of $\A$.}
\label{fig:undec}
\end{figure}

We describe a reduction from qualitative problems on POMDP to the
opacity problems addressed in this paper.  Let
$\A=(Q,\Sigma,\Act,\Delta,q_0)$ be a POMDP, with equivalence relation
$\sim$ on states.  Given a set of accepting states $F\subseteq Q$, we
let $\Acc(F)$ be either $\Buchi{F}$, or $\coBuchi{F}$ (for the
underlying non probabilistic runs of $\A$). We build a POMDP
$\A'=(Q',\Act',\Sigma',\Delta',q_\iota)$, observation function
$\obs:\runs{\A'}\rightarrow \observable$, and an $\omega$-regular
secret $\varphi$ such that schedulers for $\A$ and $\A'$ are in
correspondance (explained in more details below).
  
The POMDP $\A'$ is obtained as follows: we consider two copies $\A_1$
and $\A_2$ of $\A$ with the same alphabets $\Act$ and $\Sigma$,
denoting their disjoint set of states by $Q_1$ and $Q_2$, their
initial states by $q_0^1$ and $q_0^2$ and their target states by $F_1$ and
$F_2$, respectively. We add a new state $q_\iota$ not in $Q_1 \cup
Q_2$, a new action $\alpha_\iota$ not in $\Act$ and two new letters $a_1$
and $a_2$ not in $\Sigma$, for which the transition function is
defined by $\Delta'(q_\iota, \alpha_\iota)(a_1,q_0^1) = \Delta'(q_\iota,
\alpha_\iota)(a_2,q_0^2) = 1/2$, as depicted in Figure~\ref{fig:undec}.  The
equivalence relation on states $\sim'$ is given by $q\sim' q'$ if
$q,q'\in Q_i$ and $q\sim_i q'$ for $i=1,2$, or $q\in Q_1,q'\in Q_2$
are the copies of the same state in $Q$.  The secret $\varphi$ is the
union of two sets of runs: those starting with $q_\iota
\xrightarrow{\alpha_\iota,a_1} q_0^1$ meeting the acceptance condition
$\Acc(F_1)$ (through $\A_1$) and all the runs starting with $q_\iota
\xrightarrow{\alpha_\iota,a_2} q_0^2$ (going into $\A_2$). Formally:
\begin{displaymath}
  \varphi= \bigl (q_\iota\cdot (\alpha_\iota,a_1)\cdot \Acc(F_1)\bigr )
\cup \bigl (q_\iota\cdot(\alpha_\iota,a_2)\cdot\runs{\A_2}\bigr )
\end{displaymath}

Then, $\varphi$ can be easily given by an automaton whose acceptance
condition depends on $\Acc(F_1)$.

Finally, we define the observation function as follows: for $i=1,2$,
for all $\rho_i\in\runs{\A_i}$,
\begin{displaymath}
\obs(q_\iota\cdot(\alpha_\iota,a_i)\cdot\rho_i)=\rho,
\end{displaymath}
where $\rho$ is the corresponding run in $\A$.

Given a $\sim'$-scheduler $\sigma':\runsf{\A'}\rightarrow
\mathcal{D}(\Act)$, the probabilistic disclosure is thus:
\begin{align*}
&\PrDiscl_{\sigma'}(\varphi,\A', \obs)=\Prob_{\sigma'}(\varphi\setminus\obs^{-1}(
\obs(\overline{\varphi})))
=\\
&\Prob_{\sigma'}\Bigl( \bigl (q_\iota. (\alpha_\iota,a_1). \Acc(F_1)\bigr ) 
\cup \bigl (q_\iota.(\alpha_\iota,a_2).\Acc(F_2)\bigr )\Bigr).
\end{align*}

Since $\sigma'$ is a $\sim'$-scheduler, it is easy to see that
$\Prob_{\sigma'} \bigl (q_\iota\cdot (\alpha_\iota,a_1) \cdot
\Acc(F_1)\bigr )=\Prob_{\sigma'}\bigl (q_\iota \cdot
(\alpha_\iota,a_2) \cdot \Acc(F_2)\bigr )$.
Hence we get that
$\PrDiscl_{\sigma'}(\varphi,\A',\obs)=2\cdot\Prob_{\sigma'} \bigl
(q_\iota\cdot (\alpha_\iota,a_1)\cdot \Acc(F_1)\bigr )$

We build the $\sim$-scheduler $\sigma$ for $\A$ as follows: for each
$\rho\in\runsf{\A}$, we let
$\overline{\rho}=q_\iota\cdot(\alpha_\iota,a_1)\cdot\rho_1$ and we define
$\sigma(\rho)=\sigma'(\overline{\rho})$.
%$\sigma(\rho)=\sigma'(q_\iota\cdot(\alpha_\iota,a_1)\cdot\rho_1)$.  For
%any run $\rho\in\runsf{\A}$, we let
%$\overline{\rho}=q_\iota\cdot(\alpha_\iota,a_1)\cdot\rho_1$ and then,
Then for the corresponding cones, we have: $\Prob_\sigma(C_\rho)=2\cdot
\Prob_{\sigma'}(C_{\overline{\rho}})$. We deduce that
$\Prob_\sigma(\Acc(F))=2\cdot
\Prob_{\sigma'}(q_\iota\cdot(\alpha_\iota,a_1)\cdot
Acc(F_1))=\PrDiscl_{\sigma'}(\varphi,\A', \obs)$.

Conversely, given a $\sim$-scheduler
$\sigma:\runsf{\A}\rightarrow\mathcal{D}(\Act)$, we define a
$\sim'$-scheduler $\sigma':\runsf{\A'}\rightarrow\mathcal{D}(\Act)$ as
follows:
\begin{displaymath}
\sigma'(q_\iota)=(\alpha_\iota\mapsto 1)
\end{displaymath}
and, for all runs $q_\iota\cdot(\alpha_\iota,a_i)\cdot\rho_i\in\runsf{\A'}$,
for $i=1,2$,
\begin{displaymath}
\sigma'(q_\iota\cdot(\alpha_\iota,a_i)\cdot\rho_i)=\sigma(\rho).
\end{displaymath}
Since $\sigma$ is a $\sim$-scheduler, $\sigma'$ is a
$\sim'$-scheduler, and for $i=1,2$, we obtain that
$\Prob_{\sigma'}(q_\iota\cdot(\alpha_\iota,a_i)\cdot\Acc(F_i))=\frac{1}{2}\Prob_\sigma(\Acc(F))$.
Hence, $\PrDiscl_{\sigma'}(\varphi,\A',\obs)=\Prob_\sigma(\Acc(F))$.

\smallskip

%Assume now that $\Acc(F)=\Buchi{F}$.
 Then, there exists a
$\sim$-scheduler $\sigma$ for $\A$ such that
$\Prob_\sigma(\Acc(F))>0$ if and only if there exists a
$\sim'$-scheduler $\sigma'$ for $\A'$ such that
$\PrDiscl_{\sigma'}(\varphi,\A',\obs)>0$. 
%
%Assume that $\Acc(F)=\coBuchi{F}$. 
Also, there exists a
$\sim$-scheduler $\sigma$ for $\A$ such that
$\Prob_\sigma(\Acc(F))=1$ if and only if there exists a
$\sim'$-scheduler $\sigma'$ for $\A'$ such that
$\PrDiscl_{\sigma'}(\varphi,\A',\obs)=1$. Moreover, $\sup
\{\Prob_{\sigma}(\Acc(F)),\sigma\textrm{ $\sim$-scheduler for
  $\A$}\}=1$ if and only if $\PrDiscl(\varphi,\A',\obs)=1$.

By \cite{BaierGB12,CDGH10}, we obtain that the almost-sure disclosure
problem is undecidable for DCA (and thus for DPA), and that the almost
sure opacity is undecidable for DBA, and limit disclosure problem is
undecidable for DBA, DCA, hence for DPA that are more expressive. From
this result, we get undecidability for the general disclosure problem
and the value problems for DBA, DCA and DPA. Note that in the case of
limit disclosure, general disclosure and value problems, the
undecidability holds also when restricted to finite-memory
strategies. Indeed, undecidability results for quantitative questions
about probabilistic finite automata \cite{Rabin63,PazBook} and for
value 1 problem \cite{GO10} carry over POMDPs restricted to
finite-memory schedulers. \qed\end{proof}

We now show that, under some natural restrictions, one can recover
decidability for the almost-sure disclosure and almost-sure opacity
problems.  First, if the secret is given as a Deterministic B\"uchi
Automaton (DBA), then the almost-sure disclosure problem is decidable.
Although deterministic B\"uchi automata are strictly less expressive
than non deterministic ones, they can still be used to describe
realistic predicates. For instance, a secret which is always
recognized after a finite run would correspond to a set of runs that
reach some sink state and remain there forever. The corresponding set
of traces would be of the form $L \Sigma'^{\omega}$ for some language
$L$ of finite words and a subset $\Sigma'$ of the alphabet $\Sigma$.

\begin{theorem}\label{th:DBA}
  Given a DBA $\A_\varphi$ describing the secret,
  the almost-sure disclosure problem for POMDP is EXPTIME-complete.
\end{theorem}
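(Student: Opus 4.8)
The plan is to establish matching bounds by reducing the almost-sure disclosure problem to the \emph{almost-sure B\"uchi} problem for POMDPs, which is EXPTIME-complete~\cite{CDGH10}.

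For EXPTIME-hardness I would reuse the reduction from the proof of Theorem~\ref{th:undecidable}, instantiated with the B\"uchi condition $\Acc(F)=\Buchi{F}$. That reduction already shows that a $\sim'$-scheduler $\sigma'$ for $\A'$ satisfies $\PrDiscl_{\sigma'}(\varphi,\A',\obs)=1$ iff the associated $\sim$-scheduler $\sigma$ for $\A$ satisfies $\Prob_\sigma(\Buchi{F})=1$, so that a positive answer to the almost-sure disclosure problem for $\A'$ is equivalent to a positive answer to almost-sure B\"uchi on the POMDP $\A$. The only additional point to check is that, under B\"uchi acceptance, the secret $\varphi=\bigl(q_\iota\cdot(\alpha_\iota,a_1)\cdot\Buchi{F_1}\bigr)\cup\bigl(q_\iota\cdot(\alpha_\iota,a_2)\cdot\runs{\A_2}\bigr)$ is indeed recognised by a \emph{deterministic} B\"uchi automaton: on $a_1$ it enters a deterministic B\"uchi copy testing $\Buchi{F_1}$, and on $a_2$ it enters an accepting self-loop. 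Since almost-sure B\"uchi for POMDPs is EXPTIME-hard, hardness of the disclosure problem for DBA secrets follows.

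For membership I would first describe the disclosing set $D=\varphi\setminus\obs^{-1}(\obs(\overline{\varphi}))=\bigcup_{o\in\observableleak}\obs^{-1}(o)$ automata-theoretically: a run $\rho$ belongs to $D$ iff (i)~$\A_\varphi$ accepts $\rho$ by its B\"uchi condition, and (ii)~\emph{every} run $\rho'$ with $\obs(\rho')=\obs(\rho)$ is also accepted, i.e. $\obs(\rho)\in\observableleak$. Condition~(ii) is the delicate one, since it quantifies universally over the whole observation class. To turn it into an automaton I would take the nondeterministic automaton over $\Gamma$ obtained from the product $\A\times\A_\varphi$ by relabelling transitions through $\obs$ (removing the $\eps$-labels by $\eps$-closure), and apply the Miyano--Hayashi breakpoint construction. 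The resulting \emph{deterministic} automaton $\B_{\mathit{leak}}$, of size exponential in $|\A|\cdot|\A_\varphi|$, maintains the attacker's belief---the subset $S$ of $Q\times Q_{\A_\varphi}$ consistent with the observation---together with an owing subset, and accepts $\obs(\rho)$ by a B\"uchi condition (the owing set empties infinitely often) exactly when all consistent runs are accepting. This is precisely where the B\"uchi hypothesis is used: ``all consistent runs satisfy B\"uchi'' is itself a B\"uchi condition, whereas for a co-B\"uchi secret the analogous statement is not, which is what makes the co-B\"uchi variant undecidable in Theorem~\ref{th:undecidable}. Taking the synchronous product of $\A$, $\A_\varphi$ and $\B_{\mathit{leak}}$ (the last driven by $\obs$ along the run) turns membership in $D$ into the conjunction of two B\"uchi conditions, hence a single B\"uchi condition after the standard generalised-B\"uchi reduction, and the almost-sure disclosure problem becomes an instance of almost-sure B\"uchi for a POMDP, which is decidable~\cite{CDGH10}.

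The step I expect to be the main obstacle is the complexity bookkeeping: $\B_{\mathit{leak}}$ is already exponential, and solving almost-sure B\"uchi for a POMDP itself requires a belief (subset) construction, so a naive composition would yield a double exponential. The point to settle is that the $\B_{\mathit{leak}}$-component is a deterministic function of the observation and already carries the belief set $S$ of possible current configurations; consequently the belief construction needed to solve the POMDP does not re-subset this component but collapses to tracking the pair (belief $S$, owing set) itself, while the actual configuration to be tested for the secret ranges inside $S$. Merging the two subset constructions into a single one keeps the whole procedure exponential, which, together with the hardness above, gives EXPTIME-completeness.
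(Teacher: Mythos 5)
Your proposal is correct and follows essentially the same route as the paper's proof: the paper likewise builds a deterministic B\"uchi automaton for $\varphi\setminus\obs^{-1}(\obs(\overline{\varphi}))$ via the Miyano--Hayashi construction (it determinizes a nondeterministic co-B\"uchi automaton for $\obs^{-1}(\obs(\overline{\varphi}))$ and then complements, which is the dual formulation of your ``universal B\"uchi'' determinization), takes the same product POMDP so as to reduce the question to almost-sure B\"uchi, avoids the double-exponential blow-up by the same idea of merging the two subset constructions into a single one, and proves EXPTIME-hardness by reusing the gadget of Theorem~\ref{th:undecidable} (the paper makes $F_1$ absorbing and reduces from almost-sure reachability, while you reduce from almost-sure B\"uchi --- an immaterial difference). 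Two minor corrections: the EXPTIME algorithm for almost-sure B\"uchi on POMDPs should be credited to \cite{BaierGB12,CDR07,CDH-mfcs10} rather than \cite{CDGH10}, and in your final paragraph the $\B_{\mathit{leak}}$-state is a deterministic function of the \emph{attacker's} observation, not of the scheduler's view (the scheduler does not see the transition labels), so the scheduler's belief cannot literally collapse to the pair $(S,O)$ --- though the paper's own justification of this last bookkeeping step is equally terse, and the intended single-exponential merged construction is the same.
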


\begin{proof}
  Let $\A=(Q,\Sigma,\Act,\Delta,q_0, \sim)$ be the POMDP modeling the
  system, and $\A_\varphi$ be the (complete) deterministic B\"uchi
  automaton over $Q\cup(\Act\times\Sigma)$ that recognizes the runs of
  $\A$ that are in $\varphi$.  We show how to obtain a deterministic
  automaton
  ${\A_\textrm{discl}}=(Q',Q\cup(\Act\times\Sigma),\delta,q'_0,F)$ such
  that
  $L_B(\A_\textrm{discl})=\varphi\setminus\obs^{-1}(\obs(\overline{\varphi}))$.\\
  Indeed, with a co-B\"uchi acceptance condition for $\A_\varphi$, we
  get that $L_C(\A_\varphi)=\overline{L_B(\A_\varphi)}$. Then, it is
  possible to obtain a deterministic co-B\"uchi automaton
  $\mathcal{B}$ such that
  $L_{C}(\mathcal{B})=\obs^{-1}(\obs(\overline{\varphi}))$ (recall
  that non-deterministic co-B\"uchi automata are as expressive as
  deterministic co-B\"uchi automata \cite{MiyanoHayashi84}).  Then
  $L_B(\mathcal{B})=\overline{L_C(\mathcal{B})}$, and
  $\A_{\textrm{discl}}$ is the (complete) B\"uchi automaton obtained
  by intersecting the two deterministic B\"uchi automata $\A_\varphi$
  and $\mathcal{B}$.

% It is then not difficult to build a non deterministic B\"uchi
% automaton $\A_{\textrm{discl}}=(Q',q'_0,\Act\times\Sigma,\delta,F)$
% over $\Act\times\Sigma$ such that
% $\mathcal{L}(\A_{\textrm{discl}})=\{w\in(\Act\times\Sigma)^\omega\mid
% w\in\varphi\setminus\obs^{-1}(\obs(\overline{\varphi}))\}$.  Now, we
% can build another MDP $\A'=(Q\times Q',
% \Sigma,\Act,\Delta',(q_0,q'_0))$ as follows.  \comment{produit plus
%   naturel a definir si automate deterministe! Donc pas Buchi. Or,
%   almost sure undecidable pour parite en info partielle...}
%
%Let $\mathcal{A}=(Q,\Sigma,\Act,\Delta,q_0)$ be the MDP modeling the
% system, with observation function $\obs:\runs{A}\rightarrow
% \observable$, and $\varphi$ be the predicate the opacity of which we
% want to evaluate. Let $\mathcal{A}_\varphi$ be the automaton
% recognizing the runs of $\mathcal{A}$ that belong to $\varphi$, and
% $(\mathcal{A}_o)_{o\in\observable}$ be the set of automata
% recognizing the runs in $o$, for each $o\in\observable$.  We can say
% that a given observation class $o\in\observable$ is in
% $\observableleak$ if $\mathcal{A}_o\subseteq \mathcal{A}_\varphi$ or
% if $\mathcal{A}_o\cap\mathcal{A}_\varphi=\emptyset$.  Let then
% $\mathcal{A}_\mathit{leak} =
% \bigcup_{o\in\observableleak}\mathcal{A}_o$ be the automaton
% recognizing exactly the set of runs that are in non opaque
% observation classes (i.e. in $\observableleak$). If
% $\mathcal{A}_\mathit{leak}=(Q',\Sigma,\delta, q'_0, F)$ is a
% (complete) finite automaton,
  We build a new POMDP that will jointly simulate $\A$ and
  $\A_{\textrm{discl}}$.  Since the automaton $\A_{\textrm{discl}}$
  runs over runs of $\A$, we have to make explicit the
  transitions of $\A_{\textrm{discl}}$ on states of $\A$. For that we
  introduce a copy of each state of $\A$ in the product POMDP, from
  which we will allow $\A_{\textrm{discl}}$ to take the corresponding
  transition.  Formally, we consider the product POMDP $\A\otimes
  \A_{\textrm{discl}} = (\overline{Q}\times
  Q',\Sigma\cup\{\alpha_\iota\},\Act\cup\{a_0\},\Delta',(q_0?,q'_0),\sim')$
  where $\overline{Q}=Q\cup\{q?\mid q\in Q\}$ is the set of states of
  $\A$ augmented with a copy of these states, $\alpha_\iota$ and $a_0$ are
  new symbols, and $\Delta'$ is defined as follows: for all
  $q_1,q_2\in Q$, $q'_1,q'_2\in Q'$, $\alpha\in\Act$ and $a\in\Sigma$,
\begin{align*}
\Delta'((q_1,q'_1),\alpha)(a,(q_2?,q'_2))&=\begin{cases}
\Delta(q_1,\alpha)(a,q_2) & \\
\textrm{ if $q'_2=\delta(q'_1,(\alpha,a))$}\\
0
\textrm{ otherwise}.&
\end{cases}\\
\Delta'((q_1?,q'_1),\alpha_\iota)(a_0,(q_1,q'_2))&=\begin{cases}
1 & \textrm{ if $q'_2=\delta(q'_1,q_1)$}\\
0 & \textrm{ otherwise}.
\end{cases}
\end{align*}
The new equivalence $\sim'$ is defined by: $(q_1,q'_1)\sim'
(q_2,q'_2)$ and $(q_1?,q'_1)\sim'(q_2?,q'_2)$ if $q_1\sim q_2$.  
% We
% let %$\overline{\Act}=\Act\cup\{\alpha_\iota\}$ and 
%$\overline{\Sigma}=\Sigma\cup\{a_0\}$ and let
Let $\rho'$ be a run of $\A\otimes \A_{discl}$. %Then,
%$\rho'\in (\overline{Q}\times Q')\cdot 
%\Bigl((\overline{\Act}\times\overline{\Sigma})\cdot (\overline{Q}\times Q')
%\Bigr)^\infty$.
%
To define the projection of $\rho'$ on $\A$, we use the following
mapping $\Pi_\A$: for all $q\in Q$, $q'\in Q'$, $\alpha\in \Act$,
$a\in\Sigma$,
\begin{align*}
&\Pi_\A((q,q'))=q\\
&\Pi_\A((\alpha,a))=(\alpha,a)\\
&\Pi_\A((\alpha_\iota,a_0))=\Pi_\A((q?,q'))=\varepsilon
\end{align*}
which is extended to finite or infinite runs of
$\A\otimes\A_\textrm{discl}$ in the natural way.\\ Similarly, the
projection of $\rho'$ onto $\A_\textrm{discl}$ uses the following
mapping:
\begin{displaymath}
\Pi_{\textrm{discl}}:\runsf{\A\otimes\A_\textrm{discl}}
\rightarrow \runsf{\A_\textrm{discl}}
\end{displaymath}
defined by induction on the length of $\rho'$: For all $q'_1\in Q'$,
we let
$\Pi_\textrm{discl}((q_0?,q'_0)(\alpha_\iota,a_0)(q_0,q'_1))=q'_0\cdot
q_0\cdot q'_1$. Then, for all
$\rho'\in\runsf{\A\otimes\A_{\textrm{discl}}}$, for all $q_1\in Q$,
$q'_1,q'_2\in Q'$, $\alpha\in\Act$, $a\in \Sigma$, we define:
\begin{align*}
&\Pi_{\textrm{discl}}(\rho'\cdot (\alpha,a)\cdot (q_1?,q'_1)\cdot (\alpha_\iota,a_0)\cdot (q_1,q'_2))\\
&=
\Pi_{\textrm{discl}}(\rho')\cdot (\alpha,a)\cdot q'_1\cdot q_2\cdot q'_2
\end{align*}
The mapping $\Pi_{\textrm{discl}}$ is increasing, hence for $\rho'$ an
infinite run of $\A\otimes\A_{\textrm{discl}}$, we can define
$\Pi_{\textrm{discl}}(\rho')=\bigsqcup_{r\textrm{ finite prefix of
    $\rho'$}} \Pi_{\textrm{discl}}(r)$.

It is easy to see that $\rho=\Pi_\A(\rho')$ is a run of $\A$, and that
$\Pi_{\textrm{discl}}(\rho')$ is a run of $\A_{\textrm{discl}}$ over
$\rho$.  Then, $\rho\in\varphi\setminus
\obs^{-1}(\obs(\overline{\varphi}))$ if and only if $\rho\in
L_B(\A_\textrm{discl})$, if and only if
$\Pi_{\textrm{discl}}(\rho')\in\Buchi{F}$ if and only if
$\rho'\in\Buchi{\overline{Q}\times F}$.

Let $\sigma'$ be a $\sim'$-scheduler of
$\A\otimes\A_{\textrm{discl}}$, and let $\rho$ be a finite run of
$\A$. Observe that there is a unique run
$\rho'\in\runsf{\A\otimes\A_{\textrm{discl}}}$ such that
$\Pi_\A(\rho')=\rho$. We then let $\sigma(\rho)=\sigma'(\rho')$, which
is clearly a $\sim$-scheduler for $\A$.
Moreover, for all finite runs $\rho$ of $\A$, we have $\Prob_{\sigma}(C_{\rho})=\Prob_{\sigma'}(C_{\rho'})$. Hence $\Prob_\sigma(\varphi\setminus \obs^{-1}(\obs(\overline{\varphi}))) = \Prob_{\sigma'}(\Buchi{\overline{Q}\times F})$.\\
Conversely, let $\sigma$ be a $\sim$-scheduler of $\A$. We define a
$\sim'$-scheduler $\sigma'$ as follows. For
$\rho'\in\runsf{\A\otimes\A_{\textrm{discl}}}$, for all $q\in Q$,
$q'\in Q'$,
\begin{align*}
&\sigma'(\rho'\cdot (q?,q))=(\alpha_\iota\mapsto 1)\\
&\sigma'(\rho'\cdot (q,q'))=\sigma(\Pi_\A(\rho'\cdot (q,q')))
\end{align*}
In that case again, $\Prob_\sigma(\Pi_\A(\rho'))=\Prob_{\sigma'}(\rho')$, so $\Prob_\sigma(\varphi\setminus \obs^{-1}(\obs(\overline{\varphi}))) = \Prob_{\sigma'}(\Buchi{\overline{Q}\times F})$.

Now, the almost-sure disclosure problem is equivalent to deciding
whether there is a $\sim'$-scheduler $\sigma'$ for
$\A\otimes\A_{\textrm{discl}}$ such that
$\Prob_{\sigma'}(\Buchi{\overline{Q}\times F})=1$. From
\cite{BaierGB12, CDR07, CDH-mfcs10}, this last problem is in EXPTIME.
To solve the problem on a given POMDP, one builds an MDP in which
each state is enriched with the belief of the scheduler at this point,
hence with a size exponentially larger than the original model.  A
naive application of this algorithm to the POMDP $\A\otimes
\A_{\textrm{discl}}$ would yield a POMDP of size exponentially larger
than the original $\A$ and $\A_\varphi$, hence would provide an
algorithm in 2-EXPTIME. We then need a more careful and less costly
construction: it consists in computing the belief only on the POMDP $\A$
part, and not on the component coming from $\A_{\textrm{discl}}$,
which is simply a deterministic automaton.  Hence, the obtained MDP is
only exponential in the size of $\A$ and $\A_\varphi$, and the overall
algorithm is in EXPTIME.

Moreover, proof of Theorem~\ref{th:undecidable} provides a reduction
from qualitative problems on POMDP to almost-sure opacity and almost
sure disclosure problems.  Given a run $\rho$, we let $\Appear{\rho}$
be the set of states appearing (at least once) in the run, and
consider the acceptance condition $\reach{F}$ defined, for $F\subseteq
Q$, by $\reach{F}=\{\rho\in Q^\omega\mid \Appear{\rho}\cap
F\neq\emptyset\}$.  Then, we have shown that given a POMDP $\A$, and a
set of states $F$, one can build a POMDP $\A'$ (which is the POMDP of
Figure~\ref{fig:undec}, in which the set $F_1$ is made absorbing), an
observation function $\obs$, and a secret $\varphi$ given by a DBA,
such that there exists a $\sim$-scheduler for $\A$ such that
$\Prob_\sigma(\reach{F})=1$ if and only if there exists a
$\sim'$-scheduler $\sigma'$ for $\A'$ such that
$\PrDiscl_{\sigma'}(\A',\obs,\varphi)=1$. The EXPTIME-hardness for our
problem follows from the EXPTIME-hardness of the almost-sure problem
for POMDP with reachability conditions~\cite{CDH-mfcs10}.  \qed
\end{proof}

Finally, we show that if we restrict to finite-memory schedulers,
then both the almost-sure disclosure and almost-sure opacity problems
become decidable for secrets given as DPA. Since finite-memory schedulers
are the only schedulers of practical interest, and DPA allow to describe any
$\omega$-regular predicate, this restriction is of great interest.

\begin{theorem}
  Given a POMDP $\A$, a morphism $\obs$ as observation function, 
%observation classes $\observable$ given by some morphism $\obs$, 
and a secret given as a DPA, the finite-memory
  almost-sure opacity problem and the finite-memory almost-sure
  disclosure problem are EXPTIME-complete.
\end{theorem}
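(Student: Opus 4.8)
The plan is to extend the construction of Theorem~\ref{th:DBA} from B\"uchi to parity acceptance and to reduce the two qualitative opacity questions to the finite-memory \emph{almost-sure} and \emph{positive} parity problems for POMDPs. These are decidable even though, as Theorem~\ref{th:undecidable} shows, their general-scheduler counterparts are not, and it is precisely the finite-memory restriction that recovers decidability.

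First I would represent the disclosure set $\varphi\setminus\obs^{-1}(\obs(\overline{\varphi}))$ by a deterministic parity automaton $\A_{\textrm{discl}}$ over $Q\cup(\Act\times\Sigma)$. Starting from the DPA for $\varphi$, complementation is free for parity (shift all priorities by one), so $\overline{\varphi}$ is a DPA; applying the morphism $\obs$ and then its inverse yields a nondeterministic parity automaton for $\obs^{-1}(\obs(\overline{\varphi}))$, which I determinize by the Safra--Piterman construction at the cost of a single-exponential blow-up; complementing once more and intersecting with the DPA for $\varphi$ stays in the deterministic-parity class. Hence $\A_{\textrm{discl}}$ is a DPA, of size exponential in $|\varphi|$, recognizing precisely the disclosing runs.

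I would then form the product POMDP $\A\otimes\A_{\textrm{discl}}$ verbatim as in Theorem~\ref{th:DBA}: the auxiliary copy states together with the $\alpha_\iota,a_0$ transitions let the deterministic automaton read the states of $\A$, and $\sim'$ is defined only on the $\A$-component, so observation-based schedulers of the product see exactly the observations of $\A$. Lifting the priorities of $\A_{\textrm{discl}}$ to the product states and reusing the projections $\Pi_\A,\Pi_{\textrm{discl}}$, a run discloses the secret iff its image satisfies $\parity{F}$, and the bijective correspondence between $\sim$-schedulers of $\A$ and $\sim'$-schedulers of the product sends $\PrDiscl_\sigma(\varphi,\A,\obs)$ to $\Prob_{\sigma'}(\parity{F})$. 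This correspondence clearly preserves finite memory, the extra bookkeeping being deterministic and invisible to the scheduler. Consequently the finite-memory almost-sure disclosure problem becomes ``does there exist a finite-memory $\sim'$-scheduler with $\Prob_{\sigma'}(\parity{F})=1$'', i.e.\ finite-memory almost-sure parity, whereas the finite-memory almost-sure opacity problem is the complement of ``does there exist a finite-memory $\sim'$-scheduler with $\Prob_{\sigma'}(\parity{F})>0$'', i.e.\ finite-memory positive parity.

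Under a finite-memory scheduler a POMDP collapses to a finite Markov chain, and the existence of one meeting an almost-sure (resp.\ positive) parity goal is decidable in EXPTIME by the belief-based analyses of~\cite{CDR07,CDH-mfcs10,BaierGB12} extended to parity objectives; since decidability is closed under complement, the negation required for opacity is harmless. To keep the bound single-exponential despite $\A_{\textrm{discl}}$ being already exponential in $|\varphi|$, I would, exactly as in Theorem~\ref{th:DBA}, build the belief only on the polynomial-sized $\A$-part and merely carry the deterministic state of $\A_{\textrm{discl}}$, so that the resulting perfect-information MDP is exponential only in $|\A|$ and $|\varphi|$ and qualitative parity on it is solved in polynomial time; EXPTIME-hardness is inherited, again as in Theorem~\ref{th:DBA}, from almost-sure reachability for POMDPs~\cite{CDH-mfcs10}. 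The main obstacle is precisely this passage from B\"uchi to parity under finite memory: since the infinite-memory versions are undecidable, everything rests on the finite belief-based characterisation of finite-memory almost-sure and positive parity strategies, which must be invoked carefully and combined with the ``belief on $\A$ only'' refinement to stay within EXPTIME rather than 2-EXPTIME.
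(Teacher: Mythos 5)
Your reduction is structurally the same as the paper's: build a DPA $\A_{\textrm{discl}}$ recognizing $\varphi\setminus\obs^{-1}(\obs(\overline{\varphi}))$, form the product POMDP $\A\otimes\A_{\textrm{discl}}$ with priorities lifted from the automaton component, transfer schedulers both ways, compute the belief only on the $\A$-component so that the overall algorithm stays single-exponential, and inherit EXPTIME-hardness from almost-sure reachability for POMDPs as in Theorem~\ref{th:DBA}. Your formulation of finite-memory almost-sure opacity as the complement of the finite-memory \emph{positive} parity problem is correct (and is in fact stated more carefully than in the paper, which phrases it as the existence of a finite-memory scheduler with $\Prob_{\sigma'}(\parity{F'})=0$).

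The genuine gap is in the justification of the step you yourself identify as the crux. You claim that the existence of a finite-memory observation-based scheduler achieving an almost-sure (resp.\ positive) parity objective is decidable in EXPTIME ``by the belief-based analyses of \cite{CDR07,CDH-mfcs10,BaierGB12} extended to parity objectives.'' No such routine extension exists: those works handle reachability, B\"uchi and co-B\"uchi objectives, and for general schedulers the belief-based technique provably cannot be extended to parity, since almost-sure co-B\"uchi (hence parity) and positive B\"uchi (hence parity) are undecidable for POMDPs \cite{BaierGB12,CDGH10} --- this is exactly what Theorem~\ref{th:undecidable} exploits. That the finite-memory restriction restores decidability, with EXPTIME-completeness, is a substantive theorem in its own right; the paper invokes \cite{CCT13} for precisely this result. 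Your auxiliary remark that under a finite-memory scheduler a POMDP collapses to a finite Markov chain does not close the gap either: it applies to a \emph{fixed} scheduler, whereas the problem quantifies over finite-memory schedulers of unbounded memory size, so one still needs the non-trivial characterisation showing that belief-support-based memory suffices for qualitative parity objectives. Once \cite{CCT13} is substituted for your claimed extension, your argument coincides with the paper's proof.
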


\begin{proof}
The proof follows the same lines than the proof of Theorem~\ref{th:DBA}.
Given a POMDP $\A=(Q,\Sigma,\Act,\Delta,q_0,\sim)$ modeling the system
and a DPA $\A_\varphi$ describing the secret $\varphi$, one can obtained a DPA 
$\A_{\textrm{discl}}=(Q',Q\cup(\Act\times\Sigma),\delta,q'_0,F)$ such that 
$L_P(\A_\textrm{discl})=\varphi\setminus\obs^{-1}(\obs(\overline{\varphi}))$, 
since this language is $\omega$-regular.

As in the previous proof, we build a new POMDP as a product of $\A$
and $\A_\textrm{discl}$,
$\A\otimes\A_\textrm{discl}=(\overline{Q}\times Q',
\Sigma\cup\{\alpha_\iota\}, \Delta',(q_0?,q'_0),\sim')$. If
$F:Q'\rightarrow \{1,\cdots,k\}$, we let $F':\overline{Q}\times
Q'\rightarrow\{1,\dots,k\}$, where, for all $q\in\overline{Q},q'\in
Q'$, $F'(q,q')=F(q')$.  Then, the finite-memory almost-sure disclosure problem is
equivalent to deciding whether there is a finite-memory $\sim'$-scheduler $\sigma'$
for $\A\otimes\A_{\textrm{discl}}$ such that
$\Prob_{\sigma'}(\parity{F'})=1$, and the finite-memory almost-sure opacity problem
is equivalent to deciding whether there is a finite-memory scheduler $\sigma'$ for
$\A\otimes\A_{\textrm{discl}}$ such that
$\Prob_{\sigma'}(\parity{F'})=0$.  From \cite{CCT13}, when restricting to finite-memory
schedulers, these two
problems are in EXPTIME. As in the proof of Theorem~\ref{th:DBA}, to
maintain the procedure within exponential time, the powerset
construction on the POMDP will only be made on the $\A$ component of
the product.

Also, the proof of EXPTIME-hardness follows the same lines than the
proof of Theorem~\ref{th:DBA}.  \qed
%
% Moreover, proof of Theorem~\ref{th:undecidable} provides a reduction
% from qualitative problems on POMDP to almost sure opacity and almost
% sure disclosure problems.  Given a run $\rho$, we let
% $\Appear{\rho}$ be the set of states appearing (at least once) in
% the run, and consider the acceptance condition $\reach{F}$ defined,
% for $F\subseteq Q$, by $\reach{F}=\{\rho\in Q^\omega\mid
% \Appear{\rho}\cap F\neq\emptyset\}$. Then, we have shown that given
% a POMDP $\A$, and a set of states $F$, one can build a POMDP $\A'$
% and observation function $\obs$, secret $\varphi$ given by a DPA,
% such that there exists a $\sim$-scheduler for $\A$ such that
% $\Prob_\sigma(\reach{F})>0$ if and only if there exists a
% $\sim'$-scheduler $\sigma'$ for $\A'$ such that
% $\PrDiscl_{\sigma'}(\A',\obs,\varphi)>0$, and there exists a
% $\sim$-scheduler for $\A$ such that $\Prob_\sigma(\reach{F})=1$ if
% and only if there exists a $\sim'$-scheduler $\sigma'$ for $\A'$
% such that $\PrDiscl_{\sigma'}(\A',\obs,\varphi)=1$. Hence, we obtain
% that almost sure opacity and almost sure disclosure problems are
% EXPTIME-hard.

\end{proof}

\section{Conclusion}\label{sec:conc}
Extending the definition of probabilistic opacity to MDPs (with
infinite runs), we solve decidability questions raised
in~\cite{berard13}. More elaborate measures could be studied in this
context, and are left for future work. Another interesting issue would
be to investigate \emph{disclosure before some given delay}, either as
a number of steps in the spirit
of~\cite{DBLP:journals/tase/SabooriH11} for discrete event systems, or
for probabilistic timed system with an explicit time bound. In the
latter case, decidability results could be obtained by combining our
results with recent ones like~\cite{DBLP:conf/atva/Brihaye0GORW13}.

\medskip \noindent \textbf{Acknowledgements.} We thank anonymous
referees for their comments and suggestions.  The research was partly
supported by Austrian Science Fund (FWF) Grant No P 23499- N23, FWF
NFN Grant No S11407-N23, ERC Start grant (279307: Graph Games),
Microsoft faculty fellows award, Coop\'eration France-Qu\'ebec,
Service Coop\'eration et Action Culturelle 2012/26/SCAC, and project ImpRo 
ANR-2010-BLAN-0317.

%\bibliographystyle{plain}
%\bibliography{op}

\begin{thebibliography}{10}

\bibitem{BaierGB12}
Christel Baier, Marcus Gr{\"o}{\ss}er, and Nathalie Bertrand.
\newblock Probabilistic $\omega$-automata.
\newblock {\em J. ACM}, 59(1):1, 2012.

\bibitem{berard10}
B{\'e}atrice B{\'e}rard, John Mullins, and Mathieu Sassolas.
\newblock Quantifying opacity.
\newblock In Gianfranco Ciardo and Roberto Segala, editors, {\em Proceedings of
  the 7th International Conference on Quantitative Evaluation of Systems
  (QEST'10)}, pages 263--272. IEEE Computer Society, September 2010.

\bibitem{berard13}
B{\'e}atrice B{\'e}rard, John Mullins, and Mathieu Sassolas.
\newblock Quantifying opacity.
\newblock {\em CoRR}, abs/1301.6799, 2013.
\newblock extended version.

\bibitem{BillingsleyMeasure95}
Patrick Billingsley.
\newblock {\em {Probability and Measure}}.
\newblock Wiley, New York, NY, 3rd edition, 1995.

\bibitem{boreale11b}
Michele Boreale, Francesca Pampaloni, and Michela Paolini.
\newblock Quantitative information flow, with a view.
\newblock In Vijay Atluri and Claudia D\'{\i}az, editors, {\em Proc. of 16th
  European Symposium on Research in Computer Security (ESORICS 2011)}, volume
  6879 of {\em Lecture Notes in Computer Science}, pages 588--606. Springer,
  2011.

\bibitem{DBLP:conf/atva/Brihaye0GORW13}
Thomas Brihaye, Laurent Doyen, Gilles Geeraerts, Jo{\"e}l Ouaknine,
  Jean-Fran\c{c}ois Raskin, and James Worrell.
\newblock Time-bounded reachability for monotonic hybrid automata: Complexity
  and fixed points.
\newblock In Dang~Van Hung and Mizuhito Ogawa, editors, {\em Proc. of 11th
  International Symposium on Automated Technology for Verification and
  Analysis, ATVA 2013}, volume 8172 of {\em Lecture Notes in Computer Science},
  pages 55--70. Springer, 2013.

\bibitem{bryans08}
Jeremy~W. Bryans, Maciej Koutny, Laurent Mazar\'e, and Peter Y.~A. Ryan.
\newblock Opacity generalised to transition systems.
\newblock {\em Intl. Jour. of Information Security}, 7(6):421--435, 2008.

\bibitem{DBLP:conf/tgc/BryansKM12}
Jeremy~W. Bryans, Maciej Koutny, and Chunyan Mu.
\newblock Towards quantitative analysis of opacity.
\newblock In Catuscia Palamidessi and Mark~Dermot Ryan, editors, {\em Proc. 7th
  Int. Symp. on Trustworthy Global Computing (TGC'12), Revised Selected
  Papers}, volume 8191 of {\em Lecture Notes in Computer Science}, pages
  145--163. Springer, 2013.

\bibitem{dubreil12}
Franck Cassez, Jeremy Dubreil, and Herv\'e Marchand.
\newblock {Synthesis of opaque systems with static and dynamic masks}.
\newblock {\em Formal Methods in System design}, 40(1):88 --115, 2012.

\bibitem{CCT13}
Krishnendu Chatterjee, Martin Chmelik, and Mathieu Tracol.
\newblock What is decidable about partially observable Markov decision
  processes with omega-regular objectives.
\newblock In {\em CSL}, pages 165--180, 2013.

\bibitem{CDGH10}
Krishnendu Chatterjee, Laurent Doyen, Hugo Gimbert, and Thomas~A. Henzinger.
\newblock Randomness for free.
\newblock In {\em Proceedings of MFCS 2010: Mathematical Foundations of
  Computer Science}, Lecture Notes in Computer Science 6281, pages 246--257.
  Springer-Verlag, 2010.

\bibitem{CDH-mfcs10}
Krishnendu Chatterjee, Laurent Doyen, and {\relax Th}omas~A. Henzinger.
\newblock Qualitative analysis of partially-observable {M}arkov decision
  processes.
\newblock In Petr Hlin{\v e}n{\'y} and Anton{\'\i}n Ku{\v c}era, editors, {\em
  {P}roceedings of the 35th {I}nternational {S}ymposium on {M}athematical
  {F}oundations of {C}omputer {S}cience ({MFCS}'10)}, volume 6281 of {\em
  Lecture Notes in Computer Science}, pages 258--269, Brno, Czech Republic,
  August 2010. Springer.

\bibitem{CDR07}
Krishnendu Chatterjee, Laurent Doyen, Thomas~A. Henzinger, and
  Jean-Fran\c{c}ois Raskin.
\newblock Algorithms for omega-regular games with imperfect information.
\newblock {\em Logical Methods in Computer Science}, 3(3), 2007.

\bibitem{CH11}
Krishnendu Chatterjee and Monika Henzinger.
\newblock Faster and dynamic algorithms for maximal end-component decomposition
  and related graph problems in probabilistic verification.
\newblock In {\em SODA}, pages 1318--1336, 2011.

\bibitem{CJH04}
Krishnendu Chatterjee, Marcin Jurdzinski, and Thomas~A. Henzinger.
\newblock Quantitative stochastic parity games.
\newblock In {\em SODA}, pages 121--130, 2004.

\bibitem{chaum88}
David Chaum.
\newblock The dining cryptographers problem: unconditional sender and recipient
  untraceability.
\newblock {\em Journal of Cryptology}, 1:65--75, 1988.

\bibitem{courcoubetis95}
Costas Courcoubetis and Mihalis Yannakakis.
\newblock The complexity of probabilistic verification.
\newblock {\em {Journal of the ACM}}, 42(4):857--907, 1995.

\bibitem{dubreil10}
Jeremy Dubreil, Philippe Darondeau, and Herv\'e Marchand.
\newblock {Supervisory Control for Opacity}.
\newblock {\em IEEE Transactions on Automatic Control}, 55(5):1089 --1100, may
  2010.

\bibitem{GO10}
Hugo Gimbert and Youssouf Oualhadj.
\newblock Probabilistic automata on finite words: Decidability and
  undecidability results.
\newblock In Samson Abramsky, Cyril Gavoille, Claude Kirchner, Friedhelm Meyer
  auf~der Heide, and Paul~G. Spirakis, editors, {\em Proceedings of ICALP
  2010}, volume 6199 of {\em Lecture Notes in Computer Science}, pages
  527--538. Springer, 2010.

\bibitem{goguen82}
Joseph~A. Goguen and Jos{\'e} Meseguer.
\newblock Security policy and security models.
\newblock In {\em Proc. of {IEEE} {S}ymposium on {S}ecurity and {P}rivacy},
  pages 11--20. {IEEE} Computer Society Press, 1982.

\bibitem{lakhnech05}
Yassine Lakhnech and Laurent Mazar\'e.
\newblock Probabilistic opacity for a passive adversary and its application to
  {C}haum's voting scheme.
\newblock Technical Report~4, Verimag, 2 2005.

\bibitem{mantel2009}
Heiko Mantel and Henning Sudbrock.
\newblock Information-theoretic modeling and analysis of interrupt-related
  covert channels.
\newblock In P.~Degano, J.~Guttman, and F.~Martinelli, editors, {\em
  Proceedings of the Workshop on Formal Aspects in Security and Trust, FAST
  2008}, Springer, LNCS 5491, pages 67--81, 2009.

\bibitem{pinchinat}
Bastien Maubert, Sophie Pinchinat, and Laura Bozzelli.
\newblock Opacity issues in games with imperfect information.
\newblock In {\em 2nd Int. Symp. on Games, Automata, Logics and Formal
  Verification (GandALF'11)}, pages 87--101, 2011.

\bibitem{mazare05}
Laurent Mazar\'e.
\newblock Decidability of opacity with non-atomic keys.
\newblock In {\em Proc. 2nd {W}orkshop on {F}ormal {A}spects in {S}ecurity and
  {T}rust ({FAST}'04)}, volume 173 of {\em Intl. Federation for Information
  Processing}, pages 71--84. Springer, 2005.

\bibitem{millen87}
Jonathan~K. Millen.
\newblock {Covert Channel Capacity}.
\newblock In {\em Proc. of IEEE Symposium on Research in Computer Security and
  Privacy}, pages 144--161, 1987.

\bibitem{MiyanoHayashi84}
Satoru Miyano and Takeshi Hayashi.
\newblock Alternating finite automata on omega-words.
\newblock {\em Theor. Comput. Sci.}, 32:321--330, 1984.


\bibitem{PazBook}
A.~Paz.
\newblock {\em Introduction to probabilistic automata (Computer science and
  applied mathematics)}.
\newblock Academic Press, 1971.

\bibitem{Rabin63}
Michael~O. Rabin.
\newblock Probabilistic automata.
\newblock {\em Information and Control}, 6(3):230--245, 1963.

\bibitem{reiter98}
Michael~K. Reiter and Aviel~D. Rubin.
\newblock Crowds: anonymity for web transactions.
\newblock {\em ACM Transactions on Information and System Security},
  1(1):66--92, 1998.

\bibitem{DBLP:journals/tase/SabooriH11}
Anooshiravan Saboori and Christoforos~N. Hadjicostis.
\newblock Verification of k-step opacity and analysis of its complexity.
\newblock {\em IEEE T. Automation Science and Engineering}, 8(3):549--559,
  2011.

\bibitem{DBLP:journals/tac/SabooriH14}
Anooshiravan Saboori and Christoforos~N. Hadjicostis.
\newblock Current-state opacity formulations in probabilistic finite automata.
\newblock {\em IEEE Trans. Automat. Contr.}, 59(1):120--133, 2014.

\bibitem{Thomas97}
Wolfgang Thomas.
\newblock Languages, automata, and logic.
\newblock In {\em Handbook of Formal Languages}, pages 389--455. Springer,
  1997.

\bibitem{Vardi85}
Moshe~Y. Vardi.
\newblock Automatic verification of probabilistic concurrent finite-state
  programs.
\newblock In {\em Proceedings of 26th Annual Symposium on Foundations of
  Computer Science (FOCS)}, pages 327--338. IEEE Computer Society, 1985.

\bibitem{wittbold90}
John~T. Wittbold and Dale~M. Johnson.
\newblock Information flow in nondeterministic systems.
\newblock In {\em Proc. of IEEE Symposium on Research in Computer Security and
  Privacy}, pages 144--161, 1990.

\end{thebibliography}

\medskip \noindent
\textbf{References}

\end{document}